\newcommand{\Z}{\mathbb{Z}}
\newcommand{\N}{\mathbb{N}}
\newcommand{\SIGMA}{\mathrm{\Sigma}}
\newcommand{\PI}{\mathrm{\Pi}}
\title{Hard Asymptotic Sets for One-Dimensional Cellular Automata\thanks{Research supported by the Academy of Finland Grant 131558}}
\author{
	Ville Salo
}
\institute{University of Turku \\ TUCS -- Turku Centre for Computer Science}
\begin{document}
\maketitle

\begin{abstract}
We prove that the (language of the) asymptotic set (and the nonwandering set) of a one-dimensional cellular automaton can be $\SIGMA^1_1$-hard. We do not go into much detail, since the constructions are relatively standard.
\end{abstract}

\section{Background}

It is well-known that (the language of) the limit set of a cellular automaton can be $\PI^0_1$-hard. Usually, \cite{Hu87} is given as the reference, since this was presumably where the result was first claimed, although the proof given is wrong. We give a similar result for the asymptotic set (hopefully with a correct proof). It turns out that asymptotic sets live in the analytical hierarchy instead of the arithmetic hierarchy, and their level is $\SIGMA^1_1$. A similar problem on countable SFTs is solved in \cite{SaTo12c}.

In \cite{DuPo09}, a two-dimensional cellular automaton with a maximally complicated asymptotic set \emph{in terms of Kolmogorov complexity} is constructed. There is no reason why the asymptotic set of this CA should have high computational complexity since these two notions are relatively orthogonal. Our best guess is that it is in $\PI^0_1$, since high Kolmogorov complexity can be checked at this level, and since it is clearly computable whether a partial Robinson tiling extends to a tiling of the plane. Similarly, our construction says nothing about the Kolmogorov complexity of the asymptotic set. We note, however, that compared to the construction in \cite{DuPo09} our construction is completely trivial.

While the proof is not interesting, the result is a bit more so, since we are not aware of many examples of `practical' $\SIGMA^1_1$-complete sets. A well-known $\PI^1_1$-complete set is the set of notations for countable ordinals, though.

\section{Asymptotic Sets}

\subsection{Asymptotic Sets on Full Shifts and SFTs with Positive Entropy}

We consider cellular automata, shift-commuting continuous functions, on the full shift $X = S^\Z$. Our reference for the analytical hierarchy is \cite{Sa90}. Our main interest is in $\SIGMA^1_1$-predicates $P(w)$ with a free variable $w$ ranging over $\N$ (usually bijected with $S^*$), since these turn out to characterize the asymptotic set. The definition of such predicates is that exactly one existential second-order (set) quantifier, no universal second-order quantifier and any number of first-order (number) quantifiers is used. For these, the natural normal form (see Part~A, Chapter~1, Theorem~1.5 in \cite{Sa90}) is
\begin{equation}\label{eq:P} P(w) = (\exists C \subset \N)(\forall m \in \N)(\exists \ell \in \N)R(C,m,\ell,w), \end{equation}
where $R$ is recursive. A predicate with subsets of $\N$ as inputs is of course said to be recursive if the corresponding Turing machine halts no matter what the set is, after inspecting some (arbitrarily long but finite) prefix of the set.

\begin{definition}
The \emph{asymptotic set} of a CA $f : X \to X$ is the set
\[ \mathcal{A}(f) = \bigcup_{x \in X} \bigcap_{n \in \N} \overline{\bigcup_{k \geq n} f^k(x)} \]
\end{definition}

This is the union of sets of limit points of $f$-orbits of configurations. Here, and in all that follows, the \emph{language} of a subset $Y$ of $S^\Z$ is the set of words that occur as $y_{[0, k-1]}$ for $y \in Y$ and $k \in \N$, even if the set if not closed (it is well-known, and easy to see, that asymptotic sets need not be closed).

\begin{lemma}
\label{lem:UpperBound}
The (language of the) asymptotic set of a CA $f : X \to X$ is always $\SIGMA^1_1$.
\end{lemma}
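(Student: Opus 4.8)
The plan is to reduce membership in the language of $\mathcal{A}(f)$ to a single existential second-order quantifier over configurations, followed by an arithmetical — indeed recursive — matrix, which is exactly the shape of the normal form \eqref{eq:P}. Throughout I fix the local rule of $f$, say of radius $r$, so that $f^k(x)_{[0,\ell-1]}$ depends only on $x_{[-kr,\,\ell-1+kr]}$ and is obtained from this finite window and from $k$ by a bounded computation.

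First I would prove the combinatorial characterization: for $w \in S^*$ of length $\ell$,
\[ w \in L(\mathcal{A}(f)) \iff (\exists x \in X)(\forall n \in \N)(\exists k \geq n)\ f^k(x)_{[0,\ell-1]} = w, \]
where $L(\cdot)$ denotes the language in the sense of the text. The forward direction is immediate: if $y \in \mathcal{A}(f)$ has $y_{[0,\ell-1]} = w$, then $y$ lies in the $\omega$-limit set $\bigcap_n \overline{\{f^k(x) : k \geq n\}}$ of some $x$, so for every $n$ there is $k \geq n$ with $f^k(x)$ agreeing with $y$ on $[0,\ell-1]$. The converse is the one genuinely topological point: if $f^k(x)_{[0,\ell-1]} = w$ for infinitely many $k$, pick $k_i \to \infty$ realizing this and pass, by compactness of $X$, to a convergent subsequence $f^{k_i}(x) \to y$; then $y_{[0,\ell-1]} = w$ by continuity of the coordinate projections, and $y$ lies in $\overline{\{f^k(x) : k \geq n\}}$ for every $n$ since all but finitely many $k_i$ exceed $n$, hence $y \in \mathcal{A}(f)$ and $w \in L(\mathcal{A}(f))$. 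I would also remark here that, although $\mathcal{A}(f)$ need not be closed, its language is still well-defined by the convention in the text and the argument above is unaffected.

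Next I would cast the right-hand side in the form \eqref{eq:P}. Encode a configuration $x \in X = S^\Z$ as a subset $C \subseteq \N$ in any fixed recursive way (via bijections $\Z \cong \N$ and $\N \times \{0,\dots,|S|-1\} \cong \N$), using a decoding convention robust enough that every $C$ yields some configuration and every configuration is realized by some $C$; then the outer quantifier $\exists C \subseteq \N$ effectively ranges over all of $X$. Given $C$, $n$, $k$ and $w$, the predicate $R(C,n,k,w)$ asserting $k \geq n$ and $f^k(x)_{[0,\ell-1]} = w$ (where $x$ is decoded from $C$ and $\ell = |w|$) only inspects the finitely many bits of $C$ that code $x_{[-kr,\,\ell-1+kr]}$ — all of which lie within a prefix of $C$ of length computable from $k$ and $|w|$ — and then performs a bounded computation, so $R$ is recursive in the sense required. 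Hence
\[ w \in L(\mathcal{A}(f)) \iff (\exists C \subseteq \N)(\forall n \in \N)(\exists k \in \N)\ R(C,n,k,w), \]
which is of the form \eqref{eq:P}, so the language of $\mathcal{A}(f)$ is $\SIGMA^1_1$.

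The only real obstacle is the $(\Leftarrow)$ direction of the combinatorial characterization: promoting ``the word $w$ occurs at the origin of $f^k(x)$ for infinitely many $k$'' to ``there is an actual limit configuration of the orbit of $x$ carrying $w$ at the origin''. This is the compactness step above, and once it is in place the remainder is bookkeeping — making the encoding of $X$ into $\mathcal{P}(\N)$ explicit and checking that the matrix $R$ is recursive.
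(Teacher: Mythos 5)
Your proposal is correct and follows essentially the same route as the paper: rewrite membership in the language as $(\exists x)(\forall n)(\exists k \geq n)\, f^k(x)_{[0,|w|-1]} = w$, encode $x$ as a subset of $\N$, and observe that the matrix is recursive because $f^k(x)_{[0,|w|-1]}$ depends on only a finite, computable window of $x$. The only difference is that you explicitly carry out the compactness argument extracting a limit point from infinitely many occurrences of $w$, a step the paper leaves implicit in its ``if and only if''.
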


\begin{proof}
Given $w \in S^*$, we wish to check whether there exists $y \in \mathcal{A}(f)$ with $y_{[0, |w|-1]} = w$. This is the case if and only if there exists $x \in X$ such that $y$ with this property appears in $\bigcap_{n \in \N} \overline{\bigcup_{k \geq n} f^k(x)}$. Thus, whether $w$ is in the asymptotic set is equivalent to
\[ (\exists C \subset \N)(\forall m \in \N)(\exists \ell \in \N)R(C,m,\ell,w), \]
where $R$ checks that $\ell \geq m$, and for the configuration $y$ encoded by $C$ in some reasonable way, we have $f^\ell(y)_{[0, |w|-1]} = w$. By form, this is a $\SIGMA^1_1$ check. \qed
\end{proof}

\begin{theorem}
\label{thm:HardAsymptoticSets}
Every $\SIGMA^1_1$ subset of $\N$ can be many-one reduced to the language of the asymptotic set of some cellular automaton on a full shift. In particular, there exists an asymptotic set with a $\SIGMA^1_1$-complete language.
\end{theorem}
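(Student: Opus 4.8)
The plan is to start from the standard description of $\SIGMA^1_1$ sets by recursive trees, which is just a reorganisation of the normal form~\eqref{eq:P}: a set $A\subseteq\N$ is $\SIGMA^1_1$ iff there is a recursive tree $T_n\subseteq\N^{<\N}$, uniformly computable from $n$, such that $n\in A$ precisely when $T_n$ has an infinite branch. (The witness set $C$ becomes the branch, the quantifier $\forall m$ ranges over its prefixes, and $R$ becomes the uniformly recursive relation ``$s\in T_n$''.) The cellular automaton $f$ will essentially be a Turing-machine simulator with a few bookkeeping tracks: two \emph{endpoint symbols} $\triangleright$ and $\triangleleft$; a data track on which, to the right of $\triangleleft$, one writes a self-delimiting code of a number $n$, then a separator $\#$, then a sequence $\beta_0\#\beta_1\#\cdots$ of natural numbers; working tracks for simulating the single total machine $M$ that decides, given $(n,s)$, whether $s\in T_n$; and an \emph{error signal}. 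I take the reduction $g(n)$ to be the finite word ``$\triangleleft$, the code of $n$, the separator $\#$'', which is plainly recursive in $n$, and claim that $n\in A$ iff $g(n)$ lies in the language of $\mathcal{A}(f)$; since $\mathcal{A}(f)$ is shift invariant, this just says that $g(n)$ occurs somewhere in some configuration of $\mathcal{A}(f)$.

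The local rule is arranged so that a reading head lives to the right of $\triangleleft$ and works in rounds $k=0,1,2,\dots$: in round $k$ it starts fresh at $\triangleleft$, sweeps right, and for each $i\le k$ runs $M$ on $(\text{code of }n,\,(\beta_0,\dots,\beta_i))$ on the working tracks, then restarts as round $k+1$. If some run rejects, or the head meets a malformed block, a missing block, or \emph{any} locally inconsistent pattern whatsoever (two heads, an impossible work cell, and so on), it emits the error signal, which travels left and, on arrival, turns $\triangleright$ and $\triangleleft$ into dead symbols. The essential design constraints are: no transition ever \emph{creates} a live $\triangleright$ or $\triangleleft$; a dead endpoint stays dead; and the endpoints are otherwise inert, the head being allowed to pass over them (and over the code of $n$) to read but never to overwrite them.

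For correctness, if $n\in A$ fix an infinite branch $\beta$ of $T_n$ and take the ``clean'' configuration $x$ of the above form, with blank working tracks, the head at $\triangleleft$, and all cells outside blank. Every run of $M$ accepts, so no error signal is ever produced, and the cells carrying $g(n)$ get disturbed only while the head momentarily passes over them to re-read the code of $n$; hence $g(n)$ reappears at that fixed location at infinitely many times, and by compactness some subsequence of the orbit converges to a configuration $y$ lying in the $\omega$-limit set $\omega_f(x)$ of $x$ (the inner intersection in the definition of $\mathcal{A}(f)$, so that $y\in\mathcal{A}(f)$) in which $g(n)$ occurs. Conversely, suppose $g(n)$ occurs in some $y\in\mathcal{A}(f)=\bigcup_x\omega_f(x)$, say in $y\in\omega_f(x)$ and at a fixed location; then a live $\triangleleft$ occupies that spot in $f^{t_j}(x)$ for a sequence $t_j\to\infty$, and since live endpoints are never created, a downward induction shows it is there at \emph{all} times, so the error signal never reached it. Consequently no run of $M$ ever rejected and the head met no malformed or missing block, so the data to the right of $\triangleleft$ really is a genuine code of $n$ followed by a genuine $\beta\in\N^\N$, and, because the head re-verifies every prefix in every sufficiently late round, $(\beta_0,\dots,\beta_i)\in T_n$ for all $i$; hence $T_n$ is ill-founded and $n\in A$. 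Taking $A$ to be $\SIGMA^1_1$-complete and invoking Lemma~\ref{lem:UpperBound} yields the ``in particular'' clause.

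The only genuinely delicate point is the one familiar from every ``Turing machine inside a cellular automaton'' argument: $f$ must behave correctly on \emph{arbitrary} initial configurations, which may already carry fake machinery, half-finished computations, spurious heads, and the like. The blanket ``any local inconsistency emits the error signal'' clause, together with the ``immutable-but-killable endpoints'' device, is exactly what copes with this: it forces a live endpoint that persists at a fixed cell along an unbounded set of times to have sat there untouched since time $0$, and this is how the union over initial configurations in $\mathcal{A}(f)$ comes to provide the second-order existential quantifier (the infinite branch). One should also note that a surviving tag is \emph{not} eventually constant in time (the reading head keeps sweeping through it), but subsequential recurrence is all the $\omega$-limit set requires; this ``infinitely often'' rather than ``eventually always'' flavour is precisely what places asymptotic sets at $\SIGMA^1_1$ instead of at the $\PI^0_1$ of limit sets.
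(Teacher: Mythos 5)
There is a genuine gap in your converse direction, and it is precisely the compactness issue that any such construction must confront. Your target word $g(n)=\mbox{``}\triangleleft\,\mathrm{code}(n)\,\#\mbox{''}$ consists of \emph{static data only}: it contains no information about the simulated head. Now consider an initial configuration in which the data track after the first $\#$ is a numeral that never terminates (no further separator ever occurs). Such configurations necessarily exist: the genuine codes of elements of $\N^\N$ cannot form a closed subset of a subshift, since e.g.\ the codes of $(k,0,0,\dots)$ converge as $k\to\infty$ to a configuration with an ``infinite'' $\beta_0$. On such a configuration your head sweeps right forever trying to finish reading $\beta_0$; it never meets a malformed block, never rejects, and never emits the error signal, so the inference ``no error signal, hence the data is a genuine $\beta\in\N^\N$ and every prefix gets verified'' fails --- divergence of the reader is a liveness failure, not a locally detectable inconsistency, and your blanket error clause cannot catch it. Meanwhile the cells carrying $g(n)$ are left alone (or are restored after the head departs), so $g(n)$ persists at a fixed location for all time and therefore occurs in $\omega_f(x)\subseteq\mathcal{A}(f)$. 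This happens for \emph{every} $n$, whether or not $T_n$ has an infinite branch (indeed whether or not $M$ is ever run even once), so the reduction as stated is false.

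The fix is to make the target word certify \emph{progress} rather than mere survival: the reduction must send $n$ to a word that includes the head in a distinguished ``round completed'' state at a distinguished position, so that $g(n)$ recurs in the orbit only if the head actually returns infinitely often, i.e.\ only if every round terminates, which forces each $\beta_i$ to be genuinely finite and each verification to complete. (This is exactly why the paper's reduction maps $w$ to $(\# w\,|)\times(\rightarrow q_0 \leftarrow^{|w|})$, with the state $q_0$ re-entered only at the end of a successful round, and why it remarks that the simulation may run forever without finding the next $\ell_i$, in which case $q_0$ is not re-entered and the word is not added to the asymptotic set.) With that single change your argument goes through: the forward direction is unaffected, and in the converse direction the infinitely many recurrences of $q_0$ at the origin, combined with your (correct) downward-induction argument that live endpoints were present from time $0$, do yield that all prefixes of $\beta$ lie in $T_n$. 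The rest of your proposal --- the tree formulation of $\SIGMA^1_1$, the error/spreading mechanism for malformed configurations, and the observation that the second-order quantifier is supplied by the union over initial configurations --- matches the paper's construction in substance.
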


\begin{proof}
First, note that we can restrict to any SFT we like by adding a spreading state and having the CA introduce it when a forbidden pattern is seen. This cannot decrease the complexity of the asymptotic set. The alphabet $S$ was left unspecified in the statement of the theorem, since we can also use any (non-trivial) alphabet we like. Namely, any SFT can be recoded to one over $\{0, 1\}$ through a constant-length substitution, and we can use $0$ as the spreading state if the substitution was chosen so that $\ldots 000 \ldots$ does not appear in the image.

We use the SFT $Y$ with configurations of the form
\[ (\ldots \# \# a_0 a_1 \ldots a_j |
\begin{smallmatrix}b_0 \\ c_0\end{smallmatrix} \begin{smallmatrix}b_1 \\ c_1\end{smallmatrix}
\begin{smallmatrix}b_2 \\ c_2\end{smallmatrix} \ldots) \times Z \times A^\Z, \]
where $\#$ and $|$ are special symbols, $a_i, b_i \in \{0, 1\}$, $(c_i)_i$ is of the form $1^*00\ldots$, $A$ is a finite set of \emph{helper states} we leave unspecified, and $Z$ is composed of configurations of the form $\ldots \rightarrow \rightarrow q \leftarrow \leftarrow \ldots$, where $q \in Q$, and $Q$ is the state set of a Turing machine $M$ discussed later.

Our cellular automaton will simulate the machine $M$ on configurations of this form. The \emph{head}, marked by the $q \in Q$ on the middle track, moves around, reading values from the first track and possibly changing values on the $A^\Z$-track. On the first track, $a_i$ and $b_i$ cannot change their values, but the bits $c_i$ may be flipped from $0$ to $1$. The values $a_i$ compose the \emph{input}, the bits of $b_i$ represent the guessed set (the part $(\exists C \subset \N)$ of \eqref{eq:P}) and the guesses needed for the universal quantification (the part $(\exists \ell \in \N)$ of \eqref{eq:P}), and the values $c_i$ are used for the universal quantification itself (the part $(\forall m \in \N)$ of \eqref{eq:P}). Thus, the configuration $(b_i)_i$ encodes both an infinite subset $C$ of $\N$ and a number $\ell_i \in \N$ for each $i \in \N$. We refer to the latter numbers as a \emph{Skolemization} of the universal quantification. From now on, we leave the values of the helper states implicit, and discuss instead the projection of $Y$ where they are removed.

A \emph{signaling} configuration is a configuration of $Y$ of the form
\[ (\ldots \# \# . w | x) \times (\ldots \rightarrow \rightarrow . q_0 \leftarrow \leftarrow \ldots), \]
where $.$ denotes the origin, $q_0$ is a dedicated state of the Turing machine, $w \in \{0, 1\}^*$, and $x \in (\{0, 1\} \times \{0, 1\})^\N$. Note that the set of signaling configurations is open, since up to shifting, this is just the union of cylinders $[\# w |] \times [\rightarrow q_0 \leftarrow^{|w|}]$, where $w$ ranges over $\{0, 1\}^*$.

Given a $\SIGMA^1_1$ predicate $P(w) = (\exists C \subset \N)(\forall m \in \N)(\exists \ell \in \N)R(C,m,\ell,w)$, we choose the Turing machine and thus the cellular automaton so that $\{w \;|\; P(w)\}$ reduces to the asymptotic set via the reduction $\phi$ mapping
\[ w \mapsto (\# w |) \times (\rightarrow q_0 \leftarrow^{|w|}) \]
(the $A$-track containing, say, only unary data).

We explain what happens on a signaling configuration. First, the Turing machine exits the state $q_0$; it will not be re-entered until we explicitly state so. It then looks for the smallest $p$ such that $c_p = 0$. If one is found, the machine decodes the values $\ell_1, \ldots, \ell_p$ from $(b_i)_i$, and checks $R(C,i,\ell_i,w)$ for $1 \leq i \leq p$, decoding $C$ from $(b_i)_i$ as needed. If these checks are accepting, then $c_i$ is flipped to a $1$. The Turing machine then returns back to its original position and enters the state $q_0$ (and, say, empties the $A$-track in the progress). If something out of the ordinary happens (say, the encoding of $C$ is incorrect), a spreading state is introduced.

A configuration with $u = (\# w |) \times (\rightarrow q_0 \leftarrow^{|w|})$ at the origin is in the asymptotic set if $P(w)$ holds, since
\[ (\ldots \# \# .w | x) \times (\rightarrow . q_0 \leftarrow^{|w|}) \]
has such a configuration as a limit point if $x = \begin{smallmatrix}b_0 \\ 0\end{smallmatrix} \begin{smallmatrix}b_1 \\ 0\end{smallmatrix} \begin{smallmatrix}b_2 \\ 0\end{smallmatrix} \ldots$, if the $C$ encoded in $(b_i)_i$ is the correct quess for $w$, and the values $\ell_i$ are a corresponding Skolemization of the part $(\forall m \in \N)(\exists \ell \in \N)$ of \eqref{eq:P}.

If the spreading state ever occurs in the orbit of a point, then only words over the spreading state are added to the asymptotic set. Also, if a configuration with $u$ at the origin appears in the asymptotic set, then in particular a configuration with $u$ at the origin eventually appears. From such a configuration, the computation goes as outlined above assuming that a spreading state is not introduced. By compactness, we cannot hope for an encoding of the Skolemization $\ell_i$ where the values cannot be infinite, so that it is necessarily possible that our simulation of $M$ runs forever without finding the next $\ell_i$. In such a case, $q_0$ is of course not re-entered infinitely many times, so $u$ is not added to the asymptotic set. This means that infinitely many appearances of $u$ at the origin in fact prove that $P(w)$ holds. This concludes the proof that $\phi$ many-one reduces solutions of $P$ to the asymptotic set.

Since there exists a $\SIGMA^1_1$-hard subset of $\N$ and we can many-one reduce any $\SIGMA^1_1$ subset of $\N$ to the asymptotic set of such a cellular automaton, there exists a cellular automaton with a $\SIGMA^1_1$-hard asymptotic set. \qed
\end{proof}

We have shown that the \emph{language} of an asymptotic set can be $\SIGMA^1_1$-complete, in analogy with the limit set. Since asymptotic sets live far higher in the computability hierarchy, it seems natural to also encode configurations into subsets of $\N$ and consider the complexity of the corresponding set of subsets. We do not discuss this here.

Using Lemma~4.1 in \cite{PaSc10}, Theorem~\ref{thm:HardAsymptoticSets} seems to be extendable to any positive entropy SFT $X$. We give a rough outline of this construction: The lemma gives us, inside any positive entropy SFT (even sofic), a subshift $Y$ which is the image of a full shift in a constant-length substitution. On this subshift, we can simulate the cellular automaton constructed in Theorem~\ref{thm:HardAsymptoticSets}, using a cellular automaton $f$. Of course, there is some leftover to consider, and standard methods such as the Extension Lemma~\cite{Bo83} cannot really be used. However, as we only care about computational complexity, we can use forbidden patterns of $Y$ as spreading states.

First, in portions of the configuration containing only patterns of $Y$, $f$ is applied. Borders of such areas are moved toward the $Y$-patterns using a pigeonhold argument such as the Pumping Lemma, and by using the Marker Lemma~\cite{LiMa95} to ensure consistency of the process. In the asymptotic set of the cellular automaton $g$ obtained, we are left with only configurations where forbidden patterns of $Y$ occur with bounded gaps, and configurations over $Y$ which correspond to the $\SIGMA^1_1$-hard asymptotic set of $f$. Clearly, the asymptotic set of $g$ is then $\SIGMA^1_1$-complete, since Lemma~\ref{lem:UpperBound} naturally holds on all SFTs.

\subsection{Asymptotic Sets on Zero-Entropy SFTs}

Having dealt with the positive entropy case, it makes sense to ask what the situation is on zero-entropy SFTs. Interestingly, things are very different. Now, the natural level where asymptotic sets live is $\SIGMA^0_3$. In particular, these sets are in the arithmetic hierarchy instead of the proper analytical level $\SIGMA^1_1$. Of course, this is intuitive when one compares the normal form 
\[ P(w) = (\exists C \subset \N)(\forall m \in \N)(\exists \ell \in \N)R(C,m,\ell,w) \]
of a $\SIGMA^1_1$ predicate to the normal form
\[ P(w) = (\exists c \in \N)(\forall m \in \N)(\exists \ell \in \N)R(c,m,\ell,w) \]
of a $\SIGMA^0_3$ predicate.

\begin{lemma}
\label{lemma:AsymptoticsInRERERE}
The asymptotic set of a CA $f$ on a countable SFT $X$ is $\SIGMA^0_3$.
\end{lemma}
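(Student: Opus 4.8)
The plan is to exhibit a $\SIGMA^0_3$ normal form for membership in the asymptotic set by exploiting the key feature of countable SFTs: since $X$ is countable, every configuration is (eventually, under iteration, landing in a subsystem that is) ultimately controlled by finitely much data, so the existential set quantifier collapses to an existential number quantifier. Concretely, I would first recall (or establish) that a countable SFT $X$ has a computable enumeration of its configurations, or at least that the relevant finite descriptions suffice: a countable subshift is built from finitely many ``periodic backbones'' glued by finitely many transitional patterns, and in particular each configuration $x$ is determined by a finite amount of information together with the rule of $f$. The aim is to replace ``$\exists x \in X$'' by ``$\exists c \in \N$'', where $c$ codes such a finite description of a configuration $x_c$, and where the predicate ``$x_c$ is a legal configuration of $X$'' is arithmetic (in fact recursive or $\PI^0_1$, absorbable into the later quantifiers).

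Next I would unwind the definition of $\mathcal{A}(f)$ exactly as in the proof of Lemma~\ref{lem:UpperBound}: $w$ is in the language of $\mathcal{A}(f)$ iff there is $x \in X$ and a configuration $y$ with $y_{[0,|w|-1]} = w$ such that $y$ is a limit point of the forward orbit $(f^k(x))_{k \geq 0}$; and $y$ being a limit point means that for every radius $m$ there are infinitely many $k$ with $f^k(x)_{[-m,m]} = y_{[-m,m]}$. After replacing $\exists x$ by $\exists c$, the remaining structure is $(\exists c)(\exists y)(\forall m)(\exists \ell)\,[\ell \geq m \wedge f^\ell(x_c)_{[-m,m]} = y_{[-m,m]}]$ together with the demand $y_{[0,|w|-1]} = w$. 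The inner matrix is recursive once $c$ and a code for $y$ are fixed. The remaining obstacle is the quantifier $\exists y$ over configurations, which a priori is again a set quantifier; but here I would use that $y$, being a limit point of the orbit of $x_c$ inside a \emph{countable} SFT, again admits a finite description — indeed $y$ lies in the (countable) $\omega$-limit set of $x_c$, and one can guess a finite code $d$ for $y$ with ``$y_d \in X$ and $y_{d,[0,|w|-1]} = w$'' arithmetic. Absorbing $\exists d$ into the leading $\exists$ (pairing it with $c$), we arrive at $(\exists c)(\forall m)(\exists \ell)\,R(c,m,\ell,w)$ with $R$ recursive, which is exactly the $\SIGMA^0_3$ normal form quoted in the excerpt.

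The step I expect to be the main obstacle is making precise the claim that configurations of a countable SFT (and in particular the limit points that matter) have uniformly computable finite descriptions, and that the predicate ``$c$ is a valid such description'' does not itself cost a set quantifier. I would handle this by invoking the standard structure theory of countable sofic/SFT subshifts (Cantor–Bendixson rank is finite, the subshift is a finite union of ``eventually periodic'' pieces around finitely many ``limit'' configurations, each describable by a pair of eventually periodic rays meeting a bounded transitional window), so that a configuration is coded by a bounded tuple of finite words plus two period vectors, and legality is a recursive condition on that code. One subtlety worth flagging: we do not actually need $y$ itself to have a short code a priori; it is enough that \emph{if} $w$ occurs in $\mathcal{A}(f)$ then some witness $y$ with short code works, which follows because the orbit closure of $x_c$ is a subshift of the countable SFT $X$ and hence itself countable with the same structural bound. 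With that in hand, the reduction to the displayed $\SIGMA^0_3$ normal form is routine, mirroring Lemma~\ref{lem:UpperBound} line for line with $\exists C \subset \N$ downgraded to $\exists c \in \N$.
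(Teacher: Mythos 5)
Your proposal is correct and follows essentially the same route as the paper: the entire content of the lemma is that countability of $X$ lets you replace the second-order quantifier $\exists x \in X$ by a number quantifier $\exists c$ ranging over recursively checkable finite descriptions of configurations (eventually periodic tails plus finite transitional data), yielding the $\SIGMA^0_3$ form $(\exists c)(\forall m)(\exists \ell)R(c,m,\ell,w)$ --- and you are in fact more explicit than the paper's one-line proof about why such descriptions exist and why legality is recursive. The only divergence is cosmetic: where you introduce the limit point $y$ and then code it by a second integer $d$ (justified via countability of the orbit closure), the paper dispenses with $y$ altogether by testing that $w$ occurs at a fixed window in $f^\ell(x_c)$ for infinitely many $\ell$, letting compactness supply the limit point; that shortcut sidesteps your worry about the witness $y$ admitting a short code.
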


\begin{proof}
Given a word $w$, and again leaving encodings implicit, it is in $\SIGMA^0_3$ to check that
\[ (\exists x \in X)(\forall n)(\exists m > n)f^m(x)_{[1, |w|]} = w. \]
Namely, $X$ is countable, so a single number can encode the contents of a configuration in $X$. \qed
\end{proof}

Not all countable SFTs support a cellular automaton with a $\SIGMA^0_3$-complete asymptotic set, but some do.

\begin{theorem}
There exists a countable SFT, and a CA on it, with a $\SIGMA^0_3$-complete asymptotic set.
\end{theorem}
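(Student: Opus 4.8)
The plan is to rerun the proof of Theorem~\ref{thm:HardAsymptoticSets} essentially verbatim, making the single change forced by the shape of a $\SIGMA^0_3$ predicate $P(w) = (\exists c \in \N)(\forall m \in \N)(\exists \ell \in \N)R(c,m,\ell,w)$: the outermost quantifier now ranges over a \emph{number} $c$ rather than over a set, so the free bit-track $(b_i)_i$ that made the SFT of Theorem~\ref{thm:HardAsymptoticSets} uncountable is no longer needed. We encode the guess $c$ as a finite block, and we pre-encode no Skolemization either -- a witness $\ell$ for each instance of $(\exists\ell)$ will be searched for on the fly by the simulated Turing machine. Since the upper bound $\SIGMA^0_3$ is already given by Lemma~\ref{lemma:AsymptoticsInRERERE}, it suffices to produce, from an arbitrary $\SIGMA^0_3$ predicate $P$, a countable SFT $X$ and a CA $f$ on it such that $\{w \mid P(w)\}$ many-one reduces to the language of $\mathcal{A}(f)$; taking $P$ to be $\SIGMA^0_3$-complete then proves the theorem.

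For $X$ I would use configurations of the form
\[ (\ldots \# \# . \, w \, | \, \langle c \rangle \, \langle p \rangle \, s_0 s_1 \cdots s_{r-1} \, B B \cdots) \times (\ldots \rightarrow \rightarrow . \, q \leftarrow \leftarrow \ldots), \]
namely a semi-infinite run of $\#$'s, a finite input word $w$, the separator $|$, a finite block $\langle c\rangle$ coding the guessed number, a finite counter block $\langle p\rangle$, a finite run $s_0\cdots s_{r-1}$ of scratch cells, and then blanks $B$ forever; the second track carries a single Turing-machine head just as in Theorem~\ref{thm:HardAsymptoticSets}. Any configuration not of this shape is erased by a spreading state, and, exactly as before, once a spreading state occurs in an orbit only words over that state reach the asymptotic set, so it cannot place a ``signaling'' pattern $\phi(w) = (\# w |) \times (\rightarrow q_0 \leftarrow^{|w|})$ into $\mathcal{A}(f)$. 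The point is that $X$ is \emph{countable}: up to shift, a configuration is determined by $w$, by $c$, by the finite string $\langle p\rangle s_0\cdots s_{r-1}$, and by the position and internal state of the head.

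The cellular automaton $f$ is the evident simulator. From a signaling configuration the head leaves $q_0$ (and, as in Theorem~\ref{thm:HardAsymptoticSets}, does not re-enter it until instructed), reads the counter value $p$ from $\langle p\rangle$, and dovetails a search for some $\ell$ with $R(c,p+1,\ell,w)$, enlarging the active region to the right by converting blanks $B$ into fresh blank tape cells whenever more room is needed (the counter block is shifted accordingly, while $w$ and $\langle c\rangle$ stay frozen). If such an $\ell$ turns up, the head increments the counter, erases its scratch work, walks back to the origin, and re-enters $q_0$; if none exists, it searches forever. Consequently, starting from the signaling configuration with input $w$, guess $c$, counter $0$ and empty scratch, round $k$ is reached and completed if and only if $(\exists\ell)R(c,m,\ell,w)$ holds for all $m \le k$, and $\phi(w)$ recurs in the orbit once at the start and once after each completed round; hence $\phi(w)$ recurs infinitely often -- so that some limit point of the orbit lies in $\mathcal{A}(f)$ and contains $\phi(w)$ -- if and only if $(\forall m)(\exists\ell)R(c,m,\ell,w)$. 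Conversely, if $\phi(w)$ occurs in some $y\in\mathcal{A}(f)$ then, the spreading state being absent, $\phi(w)$ occurs infinitely often along the orbit of some $x\in X$, so $q_0$ is re-entered infinitely often there, which -- since $w$ and the encoded number $c$ never change along an orbit -- forces $(\exists\ell)R(c,m,\ell,w)$ for every $m$. In either direction, $\phi(w)$ lies in the language of $\mathcal{A}(f)$ iff $P(w)$, which is the required reduction.

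The one genuinely new point, relative to Theorem~\ref{thm:HardAsymptoticSets}, is keeping the SFT countable while still supplying the simulated machine with the unboundedly much work space it may consume; this is exactly what the ``finite active region surrounded by blanks, enlarged on demand'' device achieves, since then no individual configuration carries more than finitely much information even though the reachable configurations have active regions of unbounded length. The remaining ingredients -- presenting $X$ by finitely many forbidden patterns, the local bookkeeping of the head's outward excursion and return trip, and localising the conditions that trigger the spreading state -- are routine, and handled as in the positive-entropy construction of Theorem~\ref{thm:HardAsymptoticSets}.
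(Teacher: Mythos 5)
The construction you describe does not yield a countable SFT, and this is the heart of the matter rather than a routine detail. An SFT is closed, so it contains every limit point of the configurations you list, and a subshift whose language contains every word in $\{0,1\}^n$ for every $n$ is automatically uncountable: by compactness, every infinite binary sequence is realized as a one-sided tail of some configuration. Your $X$ is uncountable for two independent reasons. First, the reduction $\phi(w) = (\# w |) \times (\rightarrow q_0 \leftarrow^{|w|})$ requires every binary word $w$ to occur in the language of $X$, so the closure of your family of configurations contains left-infinite inputs with no $\#$ at all, one for each element of $\{0,1\}^\N$. Second, the scratch region $s_0 \cdots s_{r-1}$ must hold the work tape of a machine deciding an arbitrary recursive predicate $R$; taking limits as $r \to \infty$ puts configurations with arbitrary infinite scratch contents into $X$, again uncountably many. (This is exactly the compactness phenomenon the paper itself points out for the Skolem values $\ell_i$ in Theorem~\ref{thm:HardAsymptoticSets}: no finite set of forbidden patterns can prevent the finite blocks from degenerating into infinite ones in the limit.) Relatedly, the set of configurations you describe is not closed and hence is not itself an SFT; the SFT you actually get from "the evident" forbidden patterns is the larger, uncountable object.

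The fix is not cosmetic: on a countable SFT one cannot write arbitrary binary data on the tape at all, so both the input $w$ and the workspace must be encoded in an essentially unary fashion --- for instance the input as a distance between marker symbols and the computation as a counter (Minsky) machine whose counters are stored as gaps between finitely many marked cells over a monochromatic background. Then each configuration is determined by finitely many integers plus a bounded amount of local data, the limits as those integers tend to infinity contribute only countably many extra points, and the rest of your argument (dovetailed search for $\ell$, incrementing a round counter, re-entering $q_0$ infinitely often iff $(\forall m)(\exists \ell)R(c,m,\ell,w)$) can be carried out. This counter-machine reworking is the actual content of the construction in \cite{SaTo12c}, to which the paper defers; your proposal bypasses it by asserting countability of a subshift that is in fact uncountable.
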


We refer to \cite{SaTo12c} for a proof.

It is an interesting question what the asymptotic sets of very simple SFTs look like.

\begin{question}
Is there a natural characterization of countable SFTs that support cellular automata with $\SIGMA^0_3$-complete asymptotic sets?
\end{question}

\section{Nonwandering Sets on Full Shifts}

\begin{definition}
The \emph{nonwandering set} of a CA $f : X \to X$ is the set
\[ \mathcal{N}(f) = \{x \in X \;|\; x \in \bigcap_{n \in \N} \overline{\bigcup_{k \geq n} f^k(x)} \]
\end{definition}

While $y \in X$ is in the asymptotic set of $f$ if it is a limit point of some $x \in X$, it is in the nonwandering set if it is its \emph{own} limit point. Again, the languages of such sets live in $\SIGMA^1_1$. The upper bound is proved as Lemma~\ref{lem:UpperBound}.

\begin{lemma}
\label{lem:NonwanderingUpperbound}
The language of the nonwandering set of a CA $f : X \to X$ is always $\SIGMA^1_1$.
\end{lemma}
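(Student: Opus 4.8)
The plan is to mirror the proof of Lemma~\ref{lem:UpperBound} essentially verbatim, using the fact that the nonwandering condition ``$y$ is its own $\omega$-limit point'' is, if anything, more directly expressible than ``$y$ is some configuration's $\omega$-limit point''. Given $w \in S^*$, we must decide whether there is $y \in \mathcal{N}(f)$ with $y_{[0,|w|-1]} = w$. Unwinding the definition, such a $y$ exists iff there is a configuration $y \in X$ with $y_{[0,|w|-1]} = w$ such that for every $n$ there is $k \geq n$ with $f^k(y)$ agreeing with $y$ on an arbitrarily large central window --- equivalently, $y \in \overline{\bigcup_{k \geq n} f^k(y)}$ for all $n$.

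First I would fix the encoding: a second-order quantifier $\exists C \subset \N$ is used to guess the configuration $y$, bijecting $\Z$ with $\N$ and reading off $y$ coordinate by coordinate from $C$ in some fixed recursive way (this is the same device used in Lemma~\ref{lem:UpperBound}). Then the membership predicate becomes
\[ (\exists C \subset \N)(\forall m \in \N)(\exists \ell \in \N)\, R(C,m,\ell,w), \]
where the recursive relation $R$ checks three things, inspecting only a finite prefix of $C$: that $C$ encodes a configuration $y$ with $y_{[0,|w|-1]} = w$; that $\ell \geq m$; and that $f^\ell(y)$ agrees with $y$ on the window $[-m, m]$ (or on $[-\ell,\ell]$, or any exhausting family of windows indexed by $m$). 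The first two checks are clearly recursive given a prefix of $C$, and the third is recursive because computing $f^\ell$ on a fixed window depends only on a bounded-radius neighbourhood, hence on finitely many coordinates of $y$, hence on a finite prefix of $C$. This displays the predicate in the normal form \eqref{eq:P}, so it is $\SIGMA^1_1$.

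Finally I would check that this predicate genuinely captures the language of $\mathcal{N}(f)$: if such a $C$ exists then the encoded $y$ satisfies $y \in \overline{\bigcup_{k \geq n} f^k(y)}$ for every $n$ (given any central window, the witness $\ell$ provided for a large enough $m$ brings $f^\ell(y)$ into that window's cylinder around $y$), so $y \in \mathcal{N}(f)$ and $w$ is in its language; conversely, any $y \in \mathcal{N}(f)$ with prefix $w$ gives, by definition of the closure, for each window size an iterate agreeing on that window, which is exactly the Skolem function $m \mapsto \ell$ we need. I do not expect any genuine obstacle here --- the only point requiring a line of care is the observation that $f^\ell$ restricted to a finite window is a finite computation, so that $R$ really is recursive in the sense of inspecting only a finite prefix of the oracle $C$; everything else is a routine transcription of Lemma~\ref{lem:UpperBound} with ``$y$ is a limit point of $x$'' replaced by ``$y$ is a limit point of $y$''.
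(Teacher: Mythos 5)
Your proposal is correct and is exactly the argument the paper intends: the paper only says the upper bound ``is proved as Lemma~\ref{lem:UpperBound}'', and your transcription---guessing $y$ itself via the second-order quantifier and checking that $f^\ell(y)$ returns to agree with $y$ on growing central windows, with $R$ recursive because $f^\ell$ on a finite window depends on finitely many coordinates of $C$---is the right way to carry that out. No gaps.
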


So is the lower bound:

\begin{theorem}
\label{thm:HardNonwanderingSets}
Every $\SIGMA^1_1$ subset of $\N$ can be many-one reduced to the language of the nonwandering set of some cellular automaton on a full shift. In particular, there exists a nonwandering set with a $\SIGMA^1_1$-complete language.
\end{theorem}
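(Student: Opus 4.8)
The plan is to reuse the construction from the proof of Theorem~\ref{thm:HardAsymptoticSets} almost verbatim, modifying it so that the relevant signaling configuration is its \emph{own} limit point rather than the limit point of some auxiliary configuration. Recall that in the asymptotic-set construction, the signaling configuration $(\ldots \#\# . w | x) \times (\ldots \rightarrow\rightarrow . q_0 \leftarrow\leftarrow\ldots)$ carries, on its $(b_i)_i$ track, a guessed set $C$ together with a Skolemization $(\ell_i)_i$, and carries on the $(c_i)_i$ track a string of the form $1^*00\ldots$ recording how many of the universal checks $R(C,i,\ell_i,w)$ have so far been verified. The key observation is that since the $a_i$ and $b_i$ bits never change under $f$, and the $c_i$ bits are only ever flipped from $0$ to $1$, the state of the machine between two successive re-entries of $q_0$ at the origin differs only in the finitely many $c_i$ bits that got flipped. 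In the asymptotic-set argument this meant $u$ reappears at the origin infinitely often iff $P(w)$ holds; for the nonwandering set I want the \emph{whole configuration} to reappear.

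The fix is to put the data onto a \emph{spatially periodic-in-the-limit} background, or more simply, to observe that the signaling configuration already returns arbitrarily close to itself. Concretely: start from the signaling configuration $y$ above with the $(c_i)_i$ track all $0$. After the machine finds the smallest $p$ with $c_p = 0$ (which is $p=0$ initially, but after the first flip is $p = 1$, etc.), verifies $R(C,i,\ell_i,w)$ for $1 \le i \le p$, flips $c_p$, cleans the $A$-track, and returns to $q_0$ at the origin, the new configuration $f^{t}(y)$ agrees with $y$ on the entire first track except that $c_0, \ldots, c_p$ are now $1$ instead of $0$, and agrees with $y$ on the $Z$-track and $A$-track. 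Thus $f^t(y)$ and $y$ agree on the window $[-N, N]$ for all $t$ beyond the point where the machine has left that window and the flipped $c_i$ all lie outside it --- but they never agree outside, because the flipped bits accumulate. To make $y$ genuinely nonwandering I would instead have the machine, after a successful round of checks, \emph{reset} the $c_i$ bits back to $0$ before re-entering $q_0$: since it already knows $p$, it can walk back flipping $c_0, \ldots, c_p$ from $1$ to $0$ (this requires allowing $c_i$ to be flipped in both directions now, which is harmless for the SFT). Then each completed round returns the configuration \emph{exactly} to $y$, so $y = f^t(y)$ for the appropriate $t$, and $y$ is trivially nonwandering --- but this proves nothing, so instead the reset should be done only after verifying one more check than last time, the counter of ``how many checks done'' being stored not in the $c_i$ track but in a separate finite register carried by the head; wait --- a finite register cannot hold unboundedly many values. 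The honest resolution: keep the $(c_i)_i$ track as the unbounded counter but \emph{also} have the head, on each return to the origin, be at state $q_0$ regardless; then $f^t(y)$ restricted to $[-N,N]$ equals $y$ restricted to $[-N,N]$ for cofinitely many $t$ precisely when $q_0$ is re-entered at the origin infinitely often and the $c_i$ flips within $[-N,N]$ stabilize, which they do after finitely many rounds. Hence $y$ is its own limit point iff $q_0$ recurs at the origin infinitely often iff $P(w)$ holds.

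So the structure of the proof is: (1) take the CA $f$ from Theorem~\ref{thm:HardAsymptoticSets} and the reduction $\phi$, and set $\psi(w) = \phi(w)$ but with the $(c_i)_i$ track initialized to all zeros and the $A$-track empty; (2) show that if $P(w)$ holds, then the computation re-enters $q_0$ at the origin infinitely often while only finitely many $c_i$ within any fixed window get flipped, so the configuration $\psi(w)$ has itself as a limit point, i.e. $\psi(w) \in \mathcal{N}(f)$; (3) conversely, show that if $\psi(w) \in \mathcal{N}(f)$ then, since a configuration with $u$ at the origin recurs arbitrarily late, $q_0$ is re-entered at the origin infinitely often, and by the Skolemization argument from Theorem~\ref{thm:HardAsymptoticSets} --- infinitely many re-entries with no spreading state introduced --- this forces all the checks $R(C,i,\ell_i,w)$ to succeed for every $i$, so $P(w)$ holds; (4) conclude that $w \mapsto \psi(w)$ many-one reduces $\{w \mid P(w)\}$ to the language of $\mathcal{N}(f)$, and combine with Lemma~\ref{lem:NonwanderingUpperbound}.

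The main obstacle is making sure the configuration $\psi(w)$ is \emph{exactly} recurrent and not just ``eventually stable on bounded windows in a way that secretly depends on information that drifts to infinity.'' This is precisely the compactness subtlety flagged in the proof of Theorem~\ref{thm:HardAsymptoticSets}: the Skolemization values $\ell_i$ may be arbitrarily large, so a given round may take arbitrarily long or never terminate, and we must be careful that ``$\psi(w)$ is nonwandering'' is equivalent to ``every round terminates'' rather than to some weaker condition. The cleanest way to handle this is the same as before --- observe that the set of configurations with $u$ at the origin is open, so $\psi(w) \in \mathcal{N}(f)$ implies $u$ recurs at the origin, which (absent a spreading state, whose appearance would make $\psi(w)$ wander away forever into spreading-state territory and certainly not return) forces infinitely many honest completions of rounds, each of which certifies one more universal instance. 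Once that equivalence is nailed down, everything else is a routine transcription of the asymptotic-set argument, and indeed the paper's style suggests we should be terse: the construction is ``relatively standard.''
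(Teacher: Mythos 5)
You have correctly isolated the obstruction --- the monotone $0 \to 1$ flips on the $(c_i)_i$ track mean the configuration drifts away from itself --- and your converse direction (step (3)) is sound and matches the paper. But your ``honest resolution'' does not repair the forward direction. You argue that $f^t(\psi(w))$ agrees with $\psi(w)$ on $[-N,N]$ once ``the $c_i$ flips within $[-N,N]$ stabilize''; they do stabilize, but they stabilize at the value $1$, whereas $\psi(w)$ has them equal to $0$. So for every window large enough to contain the position of $c_0$, the orbit agrees with $\psi(w)$ there only finitely many times, and $\psi(w)$ is \emph{not} its own limit point. What your argument actually shows is that the configuration with the relevant $c_i$ set to $1$ is a limit point of the orbit of $\psi(w)$ --- which is the asymptotic-set statement you already had, not nonwandering-ness. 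Indeed, in the unmodified construction no configuration containing $u$ is nonwandering at all (a head facing $c = 1^\omega$ searches forever and never returns; a head facing $c = 1^k0^\omega$ permanently flips $c_k$), so the reduction fails outright in one direction.

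You came within one step of the fix and then backed away from it: you need a counter on the $(c_i)_i$ track that is simultaneously unbounded (so it can record how many universal instances have been verified, which, as you note, a finite register cannot) and recurrent on every finite window (so the initial configuration is revisited arbitrarily closely). The paper's device is to make $(c_i)_i$ a \emph{binary counter} that is incremented once per completed round, relaxing the SFT so that $(c_i)_i$ may be an arbitrary binary sequence. Incrementing a binary counter forever returns each finite prefix to all zeros infinitely often, so the signaling configuration with counter $0$ is its own limit point precisely when infinitely many rounds complete, i.e.\ precisely when $P(w)$ holds; and if $P(w)$ fails, $u$ is not even in the asymptotic set, exactly as before. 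With that substitution the rest of your outline goes through.
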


\begin{proof}
The proof goes as that of Theorem~\ref{thm:HardAsymptoticSets}, except that the CA does not flip the values $c_i$ to $1$ one by one, but instead increments it as a binary counter (so that $(c_i)_i$ can now be any binary configuration in the SFT). If $P(w)$ does not hold, then $u = (\# w |) \times (\rightarrow q_0 \leftarrow^{|w|})$ is not even in the asymptotic set, seen as in the proof of Theorem~\ref{thm:HardAsymptoticSets}. If $P(w)$ does hold, then $u$ is in the nonwandering set, as the point
\[ (\ldots \# \# .w | x) \times (\rightarrow . q_0 \leftarrow^{|w|}) \]
in the proof of Theorem~\ref{thm:HardAsymptoticSets} now has \emph{itself} as a limit point. \qed
\end{proof}

\bibliographystyle{plain}
\bibliography{../../../bib/bib}{}

\end{document}